\newtheorem{theorem}{Theorem}[section]          
\newtheorem{corollary}[theorem]{Corollary}
\newtheorem{conjecture}[theorem]{Conjecture}
\newtheorem{proposition}[theorem]{Proposition}
\theoremstyle{definition}
\newtheorem{definition}[theorem]{Definition}
\numberwithin{equation}{section}
\newcolumntype{C}{>{\Centering\arraybackslash}X} 
\newcommand{\N}[0]{\mathbb{N}}
\newcommand{\F}[0]{\mathbb{F}}
\newcommand{\Mod}[1]{\ (\mathrm{mod}\ #1)} 
\title{Cubic power functions with optimal second-order differential uniformity}
\author[C. O'Reilly]{Connor O'Reilly}
\address[C. O'Reilly]{Department of Computer Science, Loughborough University, Loughborough, UK}
\email{c.oreilly@lboro.ac.uk}
\author[A. M. Sălăgean]{Ana Sălăgean}
\address[A. M. Sălăgean]{Department of Computer Science, Loughborough University, Loughborough, UK}
\email{a.m.salagean@lboro.ac.uk}
\date{\monthdayyeardate\today} 
\begin{document}

\begin{abstract}
We discuss the second-order differential uniformity of vectorial Boolean functions. The closely related notion of
second-order zero differential uniformity has recently been studied in connection to resistance to the boomerang attack.
We prove that monomial functions with univariate form $x^d$ where $d=2^{2k}+2^k+1$ and $\gcd(k,n)=1$ have optimal second-order differential uniformity. 
Computational results suggest that, up to affine equivalence, these might be the only optimal cubic power functions. We begin work towards generalising such conditions to all monomial functions of algebraic degree 3. We also discuss further questions arising from computational results.
\end{abstract}

\keywords{Second-order differential uniformity, monomials, algebraic degree 3, Boomerang attack}
\subjclass{94D10 (Primary) 11T06, 94A60 (Secondary)}
\maketitle


\section{Introduction and background}

Boolean functions can be understood as mathematical functions that model transformations on bit-strings, and have wide usage across theoretical computer science; we are primarily considering cryptography and coding theory.
Extensive coverage of the application of Boolean functions in cryptography and coding theory is provided in \cites{carlet2021book, cusick2009cryptographic}. Boolean functions have historically been used in symmetric cryptography, for the construction of both stream ciphers 
and block ciphers. In particular, the S-boxes in the Data Encryption Standard (DES) and Advanced Encryption Standard (AES) ciphers can be described using Boolean functions. 
The difficulty lies in choosing  ``secure'' Boolean functions 
in the construction of these ciphers in order to provide resistance against attacks.

One well-studied attack on cryptosystems is the differential attack, introduced by \cite{biham1991differential}. For a Boolean function to be resistant to a differential attack, it must satisfy a criterion equivalent to the outputs of the derivative $D_af$ being as uniformly distributed as possible \cite{carlet2021book}*{Ch. 3.4.1}, or equivalently, having minimal differential uniformity \cites{nyberg1993differential, nyberg1993provable}. This motivates the extensive study of functions with optimal differential uniformity, namely perfect nonlinear and APN functions, as explained in Section~\ref{sec-prelim}.

The Boomerang attack is a variation of a differential attack, initially introduced in \cite{wagner1999boomerang}. This method has an advantage over previously-known differential attacks, in that it beats a bound on the minimum number of texts required to break a cipher in some cases. This disproved a common belief, known in \cite{wagner1999boomerang} as a ``folk theorem'', leading to some aspects in the design of previous ciphers. Since then, many variations of the attack have been introduced, notably \cite{dunkelman2024retracing}.

The main direction to-date in studying resistance to this attack is to generalise the differential uniformity to the second-order derivative of the chosen Boolean function. In \cite{cid2018bct}, the Boomerang Connectivity Table (BCT) was introduced to measure the resistance of an S-box to the differential attack. However, this table does not address the case of Feistel ciphers, and thus later, in \cite{boukerrou2020feistelboomerang}, the Feistel Boomerang Connectivity Table (FBCT) was introduced for this purpose. More recently, there has been a focus on the second-order zero differential spectra and the second-order zero differential uniformity. This notion extends the Feistel Boomerang Connectivity Table to Boolean functions over odd characteristic fields; indeed the definitions of the Feistel Boomerang uniformity and the second-order zero differential uniformity are identical for $p=2$. There have been a number of results on the second-order zero differential spectra of various classes of functions. We provide a full list of studied power functions over $\F_{2^n}$ in Table~\ref{tab:known-2nd-zero-diff-powers}, along with their second-order zero differential uniformity, denoted by $\nabla_f$. We note that, while we do not list them, full descriptions of the second-order zero differential spectra for each table entry can be found in the respective citations provided.

\begin{table}[!ht]
\centering
\setlength\extrarowheight{2pt}
\begin{tabular}{||c|c|c|c||}
        \hline
         $d$ & Condition & $\nabla_f$ & Reference\\
         \hline
         \hline
         $2^n-2$ & $n$ odd, or $n$ even & 0, or 4 & \cite{eddahmani2022explicit}*{Thm.~10} \\
         \hline
         $2^k+1$ & $\gcd(k,n)=1$ & $0$ & \cite{eddahmani2022explicit}*{Thm.~12}\\
         \hline
         $2^{2k}+2^k+1$ & $n=2^{4k}$ & $2^{2k}$ & \cite{eddahmani2022explicit}*{Thm.~14}\\
         \hline
         $2^{m+1}-1$ & $n=2m+1$ or $n=2m$ & 0 or $2^m$ & \cite{man2023indepth}*{Cor.~1}\\
         \hline
         $2^m-1$ & $n=2m+1$ or $n=2m$ & $2^m-4$ & \cite{garg2023secondorder}*{Thm.~4.1}\\
         \hline
         21 & $n$ odd & 4 & \cite{garg2023secondorderapn}*{Thm 4.1}\\
         \hline
         $2^n-2^s$ & $\gcd(n,s+1)=1$, $n-s=3$ & 4 & \cite{garg2023secondorderapn}*{Thm.~4.9}\\
         \hline
         7 & any & 4 & \cite{man2023secondorder}*{Thm.~1}\\
         \hline
         $2^{m+1}+3$ & $n=2m+1$, or $n=2m$ & 4 or $2^m$ & \cite{man2023secondorder}*{Thm.~2}\\
         \hline
         $2^{2k}+2^k+1$ & $\gcd(k,n)=1$ & $\leq 4$ & This paper, Cor.~\ref{cor-zdu-leq-4}\\
         \hline
\end{tabular}
\caption{Power functions $f(x)=x^d$ over $\F_{2^n}$ with known second-order zero differential uniformity.}
\label{tab:known-2nd-zero-diff-powers}
\end{table}

In \cite{aubry2018differential}, the authors introduce the second-order differential uniformity, denoted by $\delta^2(f)$, and compute the second-order differential uniformity of the inverse function, see Table~\ref{tab:known-2nd-diff-powers}. Our approach follows this definition, which we feel most naturally extends the definition of the differential uniformity. Note that $\delta^2(f)\geq \nabla_f$, so any results on the second-order differential uniformity also yield an upper bound on the second-order zero differential uniformity. We also mention the paper \cite{eddahmani2024doublediff}, where the authors define the double differential uniformity, a similar notion defined for Boolean functions from $\F_{2^n}$ to $\F_{2^m}$. The authors then compute values of the double differential uniformity for the inverse, Gold, and Bracken-Leander functions. When $n=m$, the definition of the double differential uniformity is identical to that of the second-order differential uniformity, and thus these results are directly relevant to our work. We summarise results on functions of known second-order differential uniformity, denoted by $\delta^2(f)$, in Table~\ref{tab:known-2nd-diff-powers}. 

\begin{table}[!ht]
\centering
\setlength\extrarowheight{2pt}
\begin{tabular}{||c|c|c|c||}
        \hline
         $d$ & Condition & $\delta^2(f)$ & Reference\\
         \hline
         \hline
         $2^n-2$ & $n\geq 6 $& 8 & \cite{aubry2018differential}*{Prop.~8.1} \\
         \hline
         $2^k+1$ & $\gcd(k,n)=1$ & $2^n$ & \cite{eddahmani2024doublediff}*{Cor.~2}\\
         \hline
         $2^{2k}+2^k+1$ & $n=2^{4k}$ & $2^n$ & \cite{eddahmani2024doublediff}*{Cor.~3}\\
         \hline
         $2^{2k}+2^k+1$ & $\gcd(k,n)=1$ & $4$ & This paper, Thm.~\ref{thm-gcd=1->optimal}\\
         \hline
         $2^{2k}+2^k+1$ & $\gcd(k,n)>1$ & $2^n$ & This paper, Prop.~\ref{prop-gcd>1->u2^n}\\
         \hline
\end{tabular}
\caption{Power functions $f(x)=x^d$ over $\F_{p^n}$ with known second-order differential uniformity}
\label{tab:known-2nd-diff-powers}
\end{table}

Our key result, Theorem~\ref{thm-gcd=1->optimal}, provides a new class of power functions with optimal second-order differential uniformity (the optimal value being 4), namely the functions $f:\F_{2^n}\to\F_{2^n}$ given by $f(x)=x^d$, where $d=2^{2k}+2^k+1$, and $\gcd(k,n)=1$. This also gives an upper bound on their second-order zero differential uniformity 
We also  discuss algebraic degree 3 power functions in the general case, and provide necessary conditions for optimal second-order differential uniformity (see Section~\ref{sec-general}). Finally, we conclude in Section~\ref{sec-comp} by discussing further questions arising from computational results and conjecture that the cubic functions described in Theorem~\ref{thm-gcd=1->optimal} are, up to affine equivalence, the only cubic power functions with optimal second-order differential uniformity.


\section{Preliminaries}\label{sec-prelim}

\subsection*{Notation}

We use $p$ to denote a prime, and $q$ to denote a prime power. We then denote by $\F_{q}$ the finite field containing $q$ elements, and by $\F_{q}^*$ the non-zero elements of $\F_{q}$. Most commonly, we choose $p=2$, $q=2^n$ for some $n\in \N$, and write $\F_{2^n}$. Similarly, we denote by $\F_p^n$ the vector space of dimension $n$ over the field $\F_p$ and, most commonly choosing $p=2$, write $\F_2^n$. We use the convention $\N=\{1,2,3,\dots\}$, i.e. not including $0$. 

\subsection*{Boolean functions}

Given $n,m\in\N$, a Boolean function $f$ is a function $f: \F_2^n\to \F_2$, and a vectorial Boolean function is a function $g: \F_2^n\to \F_2^m$ \cite{carlet2021book}. Note that we can always write $g(x)=(f_1(x), \dots, f_m(x))$ for some Boolean functions $f_1,\dots, f_m$, which we refer to as the coordinate functions of $f$. We commonly refer to vectorial Boolean functions as, simply, Boolean functions, and consider a Boolean function $f: \F_2^n\to \F_2$ as a specific case with $m=1$. In this work, we will always consider Boolean functions with respect to affine equivalence. For the following definitions and further information, we refer to \cite{carlet2021book}.
\begin{definition}
    Given Boolean functions $f,g: \F_2^n \to \F_2^m$, we say that $f,g$ are affine equivalent if there exists some affine automorphims $L_1, L_2$ of $\F_2^n, \F_2^m$ respectively, such that $f=L_2\circ g\circ L_1$.
\end{definition}

We will often refer to the algebraic degree of a Boolean function.
\begin{definition}
    The algebraic normal form, or ANF, of a given Boolean function $f: \F_2^n\to \F_2$ is the unique representation given by:
\begin{align*}
    f(x)=\sum_{I\subseteq\{1,\dots,n\}} a_I \left( \prod_{i\in I} x_i\right) = \sum_{I\subseteq \textit{supp}(x)}a_I
\end{align*}
where $a_I\in \F_2$, and $x=(x_1,\dots, x_n)$.
\end{definition}
\begin{definition}
    The degree of the ANF of a Boolean function $f$ is denoted by $d_{\textit{alg}}f$, and is called the algebraic degree of $f$.
\end{definition}
Note that a Boolean function is known as cubic when it has algebraic degree less than or equal to 3. We now extend these definitions to vectorial Boolean functions.
\begin{definition}
    Given a vectorial Boolean function $f: \F_2^n \to \F_2^m$, the ANF of $f$ is then:
\begin{align}
    f(x) = \sum_{I\subseteq\{1,\dots,n\}} a_I \left( \prod_{i\in I} x_i\right) = \sum_{I\subseteq \textit{supp}(x)}a_I\label{eq:multi}
\end{align}
where $a_I\in \F_2^m$, and $x=(x_1,\dots, x_n)$. The algebraic degree of $f$ is then given by:
\begin{align*}
    d_{\textit{alg}}f=\max\{\lvert I \rvert \ : \ I \subseteq \{1,\dots, n\},\ a_I \neq 0\}
\end{align*}
i.e., the maximal algebraic degree of the coordinate functions of $f$.
\end{definition}

We can equivalently write any (vectorial) Boolean function with $n=m$ uniquely as a map $f: \F_{2^n}\to \F_{2^n}$, in the form:
\begin{align}
    f(x)=\sum_{i=0}^{2^n-1} b_i x^i\label{eq:uni}
\end{align}
where $b_i \in \F_{2^n}$, via the isomorphism between $\F_{2}^n$ and $\F_{2^n}$.

When we write a Boolean function in the form of \eqref{eq:multi}, we say it is in multivariate form, and when we write it in the form of \eqref{eq:uni}, we say it is in univariate form. We will primarily work with Boolean functions in univariate form from here.

\subsection*{Differential uniformity}

We begin with the following key definitions.

\begin{definition}
    Given finite Abelian groups $(G_1, +)$ and $(G_2, +)$, some function $f: G_1 \to G_2$, and $a\in G_1^*$, we define the discrete derivative of $f$ in direction $a$ by:
    \begin{align*}
        D_af(x)= f(x+a)-f(x).
    \end{align*}
Moreover, if $f: G \to G$ for a group $(G,+)$, given $a_i \in G^*$ for $i=1,\dots, k$, we denote repeated differentiation in directions $a_i$ by:
    \begin{align*}
        D_{a_1, a_2, \dots, a_k}^{(k)} f(x) = D_{a_1}D_{a_2}\cdots D_{a_k}f(x).
    \end{align*}
\end{definition}
Note that applying the discrete derivative to a Boolean function always reduces the algebraic degree by at least 1 \cite{lai1994higher}.

We can then define the differential uniformity of a given Boolean function.
\begin{definition}[\cite{nyberg1993differential}*{Sect. 2}]
    Let $G_1$ and $G_2$ be finite Abelian groups. Then $f: G_1 \to G_2$ is called differentially $\delta$-uniform if, for every non-zero $a\in G_1$ and $b\in G_2$, the equation $D_af(x)=b$ has at most $\delta$ solutions in $G_1$. The minimum such $\delta$ is denoted by $\delta_f$ and is called the differential uniformity of $f$.
\end{definition}

We often consider differentiation in the group $(\F_{p^n}, +)$. Functions of optimal differential uniformity are known as perfect nonlinear functions.
\begin{definition}
    We say that a function $f: \F_{p^n} \to \F_{p^n}$ is perfect nonlinear if it is differentially 1-uniform.
\end{definition}

Note that, when $p=2$, for any $a$ we have $D_af(x) = D_af(x+a)$. As such in this case, the function $D_af$ can be at best 2-to-1, so $f$ can be at best differentially 2-uniform.

\begin{definition}[\cite{nyberg1993provable}*{Sect. 3}]
    We say that a function $f: \F_{2^n} \to \F_{2^n}$ is almost perfect nonlinear, or APN, if the function $D_af$ is 2-to-1 for every $a\in \F_{2^n}^*$, i.e., $\lvert\{ D_af(x) \ :\ x\in \F_{2^n}\}\rvert =2^{n-1}$ for all $a\neq 0$.
\end{definition}

We then define the second-order differential uniformity, following from \cite{aubry2018differential} with slight notation changes.

\begin{definition}\label{def-second-diff-unif}
    Given $n\in \N$, let $f: \F_{2^n}\to \F_{2^n}$ be a Boolean function. Then we say that $f$ has second-order differential uniformity $\delta^2(f)$ if $\delta^2(f)$ is the least integer such that, for every $a,b\in \F_{2^n}^*$, $a\neq b$, and every $c\in \F_{2^n}$, the equation $D_{a,b}^{(2)}f(x)=c$ has at most $\delta^2(f)$ solutions, i.e.,
    \begin{align*}
        \delta^2(f)=\max_{a\neq b\in\F_{2^n}^*,\ c\in F_{2^n}}\lvert\{x\in \F_{2^n} \ : \ D^{(2)}_{a,b}f(x)=c\}\rvert.
    \end{align*}
\end{definition}

Note that we always have:
\begin{align*}
    D_{a,b}^{(2)}f(x)=D_{a,b}^{(2)}f(x+a)=D_{a,b}^{(2)}f(x+b)=D_{a,b}^{(2)}f(x+a+b),
\end{align*}
and thus for any Boolean function $f$, we have $\delta^2(f)\geq 4$. Moreover, this gives that $\delta^2(f)$ is always a multiple of 4.
\begin{definition}\label{def-optimal-second-unif}
    Given $n\in \N$, let $f: \F_{2^n}\to \F_{2^n}$ be a Boolean function. We say that $f$ has optimal second-order differential uniformity if $\delta^2(f)=4$.
\end{definition}

Note that the second-order differential uniformity is an (extended) affine invariant \cite{eddahmani2024doublediff}*{Prop.~2}.

The following observation is used widely, e.g. in the proof of \cite{aubry2018differential}*{Prop.~8.1}. It can also be considered as a corollary of \cite{eddahmani2024doublediff}*{Prop.~3}. We state it here for later reference. 
\begin{proposition}\label{prop-a,b-equiv-1,c}
    Let $f: \F_{2^n}\to \F_{2^n}$ be a power function defined by $f(x)=x^d$. Then, for any $a,b\in \F_{2^n}^*$ we have
    $D_{a,b}f(x) = a^d D_{1,\frac{b}{a}}f\left(\frac{x}{a}\right)$. Therefore
    the second-order differential uniformity of $f$ is given by:
    \begin{align*}
        \delta^2(f)=\max_{c\in\F_{2^n}\setminus\{0,1\},\ c'\in F_{2^n}}\lvert\{x\in \F_{2^n} \ : \ D^{(2)}_{1,c}f(x)=c'\}\rvert.
    \end{align*}
\end{proposition}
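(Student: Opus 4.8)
The plan is to establish the pointwise identity first and then use it to perform a change of variables in the solution count.

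First I would expand the second-order derivative explicitly. Since $p=2$, for any directions $a,b$ we have
\begin{align*}
    D_{a,b}^{(2)}f(x) = f(x) + f(x+a) + f(x+b) + f(x+a+b).
\end{align*}
Substituting $f(x)=x^d$ and factoring out $a^d$ (which is legitimate as $a\neq 0$) gives
\begin{align*}
    D_{a,b}^{(2)}f(x) = a^d\left[\left(\tfrac{x}{a}\right)^d + \left(\tfrac{x}{a}+1\right)^d + \left(\tfrac{x}{a}+\tfrac{b}{a}\right)^d + \left(\tfrac{x}{a}+1+\tfrac{b}{a}\right)^d\right] = a^d\, D_{1,\frac{b}{a}}^{(2)}f\!\left(\tfrac{x}{a}\right),
\end{align*}
which is exactly the claimed identity (reading $D_{a,b}$ as the second-order operator).

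Next I would turn this into a statement about solution counts. Fix arbitrary $a,b\in\F_{2^n}^*$ with $a\neq b$ and a target $c''\in\F_{2^n}$. By the identity, the equation $D_{a,b}^{(2)}f(x)=c''$ is equivalent to $D_{1,b/a}^{(2)}f(x/a)=c''/a^d$. Setting $c:=b/a$ and $c':=c''/a^d$, and noting that $u=x/a$ runs bijectively over $\F_{2^n}$ as $x$ does, the two equations have the same number of solutions. Crucially, because $a,b$ are nonzero and $a\neq b$, the ratio $c=b/a$ lies in $\F_{2^n}\setminus\{0,1\}$, so every triple $(a,b,c'')$ is matched with a pair $(c,c')$ that is admissible in the right-hand maximum. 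This shows the left-hand maximum (over all $a\neq b$ and $c''$) is at most the right-hand one.

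Finally I would check the reverse inequality so that the two maxima coincide: any admissible pair $(c,c')$ with $c\notin\{0,1\}$ is realised by the triple $(a,b,c'')=(1,c,c')$, for which $a=1\neq c=b$ are both nonzero, giving the identical solution count. Hence the maximum of the solution count over all $(a,b,c'')$ equals the maximum over all $(c,c')$, which is the displayed formula. I do not expect any genuine obstacle here; the only point requiring care is verifying that $b/a$ avoids both $0$ and $1$, which is precisely where the hypotheses $a,b\neq 0$ and $a\neq b$ enter.
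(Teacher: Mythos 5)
Your proof is correct and follows essentially the same route as the paper: expand the second-order derivative, factor out $a^d$ to obtain the identity $D^{(2)}_{a,b}f(x)=a^d D^{(2)}_{1,b/a}f(x/a)$, then relabel $c=b/a$, $c'=c''/a^d$ and match solution counts bijectively. Your explicit verification of both inequalities (including realising each admissible $(c,c')$ via the triple $(1,c,c')$) is slightly more detailed than the paper's, which leaves that direction implicit, but the argument is the same.
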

\begin{proof}
    Given $a,b\in\F_{2^n}^*$, with $a\neq b$, we compute the second derivative $D_{a,b}f$.
    \begin{align*}
        D_{a,b}f(x)&= x^d+(x+a)^d+(x+b)^d+(x+a+b)^d\\
        &= a^d\left(\left(\frac{x}{a}\right)^d+\left(\frac{x}{a} + 1\right)^d+\left(\frac{x}{a} + \frac{b}{a}\right)^d+\left(\frac{x}{a} + 1 + \frac{b}{a}\right)^d\right)\\
        &= a^d D_{1,\frac{b}{a}}f\left(\frac{x}{a}\right).
    \end{align*}
    Relabeling $y=\frac{x}{a}$, and $c=\frac{b}{a}$, we get
    that for any $c'\in \F_{2^n}$:
    \begin{align*}
        \lvert\{x\in \F_{2^n} \ : \ D^{(2)}_{a,b}f(x)=c'\}\rvert = \lvert\{y\in \F_{2^n} \ : \ D^{(2)}_{1,c}f(y)=a^{-d}c'\}\rvert.\tag*{\qedhere}
    \end{align*}
\end{proof}

Finally, we define the second-order zero differential uniformity.

\begin{definition}[\cite{li2022secondorder}]\label{def-second-zero-unif}
    Given $f: \F_{p^n}\to \F_{p^n}$, and $a,b\in \F_{p^n}$, we define the second-order zero differential spectra of $f$ with respect to $a,b$ as:
    \begin{align*}
        \nabla_f(a,b)=\lvert \{x\in \F_{p^n} \ : \ f(x+a+b)-f(x+a)-f(x+b)+f(x)=0\}\rvert.
    \end{align*}
    If $p=2$, we define the second-order zero differential uniformity of $f$ as:
    \begin{align*}
        \nabla_f = \max\{\nabla_f(a,b) \ : \ a \neq b,\ a,b \neq 0\}.
    \end{align*}
    If $p>2$, then we define the second-order zero differential uniformity of $f$ as:
    \begin{align*}
        \nabla_f = \max\{\nabla_f(a,b) \ : \ a,b\neq 0\}.
    \end{align*}
\end{definition}

Note that when $p=2$ we always have $\nabla_f \leq \delta^2(f)$, and similarly to $\delta^2(f)$, $\nabla_f$ is always a multiple of 4.


\section{Power functions with exponent \texorpdfstring{$2^{2k}+2^k+1$}{2/2k+2/k+1}}\label{sec-2k-exp}

We first consider a particular class of power functions, and give results on its second-order differential uniformity in all cases. We note similarity to the well-studied Bracken-Leander functions \cite{bracken2010highlynonlinear}, but without the restriction $n=4k$.
\begin{theorem}\label{thm-gcd=1->optimal}
        Let $k,n\in \N$ satisfy $\gcd(k,n)=1$. Then the Boolean function $f: \F_{2^n} \to \F_{2^n}$ given by $f(x)= x^{2^{2k}+2^k+1}$ has optimal second-order differential uniformity.
\end{theorem}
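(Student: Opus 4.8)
The plan is to exploit that $f(x)=x^d$ with $d=2^{2k}+2^k+1$ has algebraic degree $3$, so that its second-order derivative is an affine map; I will compute this map explicitly and reduce the whole problem to bounding the size of a single kernel. Write $q=2^k$, so $d=q^2+q+1$ and both $x\mapsto x^q$ and $x\mapsto x^{q^2}$ are additive on $\F_{2^n}$. First I would expand $(x+a)^d$ using this additivity and then apply a second derivative in direction $b$. Every term of $D_af$ that is a single Frobenius power $x^{q^i}$ contributes only a constant under $D_b$, whereas the mixed terms $x^{q^2+q},\,x^{q^2+1},\,x^{q+1}$ contribute genuinely linearized terms. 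Collecting coefficients gives
\begin{align*}
    D_{a,b}^{(2)}f(x)=\alpha x^{q^2}+\beta x^{q}+\gamma x+(\text{const}),
\end{align*}
with $\alpha=ab^q+a^qb$, $\beta=ab^{q^2}+a^{q^2}b$, and $\gamma=a^qb^{q^2}+a^{q^2}b^q$. Thus $L_{a,b}(x):=\alpha x^{q^2}+\beta x^{q}+\gamma x$ is an $\F_2$-linearized polynomial, and for each target $c'$ the equation $D_{a,b}^{(2)}f(x)=c'$ has either $0$ or $|\ker L_{a,b}|$ solutions. Hence $\delta^2(f)=\max_{a\neq b}|\ker L_{a,b}|$. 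Since the remark following Definition~\ref{def-second-diff-unif} already gives $\delta^2(f)\geq 4$, it suffices to prove $|\ker L_{a,b}|\leq 4$ for all admissible $a,b$.

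By Proposition~\ref{prop-a,b-equiv-1,c} I may take $a=1$ and $b=c$ with $c\in\F_{2^n}\setminus\{0,1\}$. Then $\alpha=c+c^q$, $\beta=c+c^{q^2}$, $\gamma=c^q+c^{q^2}$, and two relations become visible: $\gamma=\alpha^q$, and $\alpha+\beta+\gamma=0$ (the latter because the three expressions sum to $2(c+c^q+c^{q^2})=0$ in characteristic $2$). Setting $Q(x)=x^q+x$ and using the identity $x^{q^2}+x=Q(x)^q+Q(x)$ together with $\gamma=\alpha+\beta$, I would rewrite
\begin{align*}
    L_{1,c}(x)=\alpha\bigl(x^{q^2}+x\bigr)+\beta\bigl(x^q+x\bigr)=\alpha\,Q(x)^q+\alpha^q\,Q(x)=R\bigl(Q(x)\bigr),
\end{align*}
where $R(y)=\alpha y^q+\alpha^q y$. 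Thus $L_{1,c}$ factors as a composition of two $q$-linearized polynomials of $q$-degree $1$, which yields $|\ker L_{1,c}|=|\ker Q|\cdot|\ker R\cap\mathrm{Im}\,Q|\leq|\ker Q|\cdot|\ker R|$.

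It then remains to compute these two kernels, and this is exactly where $\gcd(k,n)=1$ enters. The kernel of $Q$ is $\{x:x^{2^k}=x\}=\F_{2^{\gcd(k,n)}}=\F_2$, so $|\ker Q|=2$. Since $c\notin\F_2$ we have $\alpha=c+c^q\neq0$, and for $y\neq0$ the condition $R(y)=0$ reads $(y/\alpha)^{2^k-1}=1$, i.e. $y/\alpha\in\F_{2^{\gcd(k,n)}}=\F_2$; hence $\ker R=\{0,\alpha\}$ and $|\ker R|=2$. Therefore $|\ker L_{1,c}|\leq4$, which combined with the lower bound $\delta^2(f)\geq4$ gives $\delta^2(f)=4$, as claimed.

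The main obstacle is the middle step: spotting the coefficient relations $\gamma=\alpha^q$ and $\alpha+\beta+\gamma=0$ that permit the factorization $L_{1,c}=R\circ Q$ through the natural intermediate map $Q(x)=x^q+x$. The normalization $a=1$ afforded by Proposition~\ref{prop-a,b-equiv-1,c} is precisely what makes the relation $\alpha+\beta+\gamma=0$ hold (it fails for general $a$), and the coprimality hypothesis $\gcd(k,n)=1$ is then exactly what forces each of the two linearized factors to have a two-element kernel, so that their composition cannot exceed the optimal bound $4$.
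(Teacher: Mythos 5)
Your proof is correct and takes essentially the same route as the paper: after the identical reduction to $a=1$, $b=c$ via Proposition~\ref{prop-a,b-equiv-1,c}, your factorization $L_{1,c}=R\circ Q$ with $Q(x)=x^q+x$ is precisely the paper's manipulation $0=(z^q+z)^q+(z^q+z)(c^q+c)^{q-1}$ written structurally, and both arguments invoke $\gcd(k,n)=1$ in the same two places (once to get $\ker Q=\F_2$, once for the $(q-1)$-st root-of-unity step, i.e.\ your computation of $\ker R$). The only difference is presentational: the paper solves the kernel equation explicitly to exhibit the four roots $\{0,1,c,1+c\}$, while you bound $|\ker L_{1,c}|\le|\ker Q|\cdot|\ker R|=4$.
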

\begin{proof}
    By Proposition~\ref{prop-a,b-equiv-1,c}, it is equivalent to consider only the derivatives $D_{1,c}^{(2)}f$, where $c\in\F_{2^n}\setminus\{0,1\}$. Let $q=2^k$. We then have:
    \begin{align}
        D_{1,c}^{(2)}f(x) &=x^{q^2+q+1} + (x+c)^{q^2+q+1} + (x+1)^{q^2+q+1} + (x+c+1)^{q^2+q+1}\notag\\
        &= x^{q^2+q+1} + (x+c)(x^{q^2}+c^{q^2})(x^q+c^q)+(x+1)(x^{q^2}+1)(x^q+1)\notag\\
        &\quad +(x+c+1)(x^{q^2}+c^{q^2}+1)(x^q+c^q+1)\notag\\
        &\quad \vdots\notag\\
        &= x^{q^2}(c^q+c) + x^q(c^{q^2}+c)+x(c^{q^2}+c^q)\label{eqn:d2(1,c)f}\\
        &\quad + c^{q^2+q} + c^{q^2+1} + c^{q^2} + c^{q+1} + c^q + c.\notag
    \end{align}
    Suppose there exists $x,y$ such that $D_{1,c}^{(2)}f(x)= D_{1,c}^{(2)}f(y)$. Then defining $z=x+y$ we have:
    \begin{align}
        0 &= z^{q^2}(c^q+c) + z^q(c^{q^2}+c)+z(c^{q^2}+c^q).\label{eqn:g(z)=0}
    \end{align}
    Clearly, we have solutions of \eqref{eqn:g(z)=0} for $z\in\{0,1\}$, so assume $z\notin \{0,1\}$. As $c\notin\{ 0,1\}$ and $\gcd(k,n)=1$, we have $c^{q^2}\neq c^q$, $c^{q^2}\neq c$, and $c^q \neq c$. We then have:
    \begin{align}
        \frac{c^{q^2}+c^q}{c^q+c} &= \frac{(c^q+c)^q}{c^q+c} = (c^q+c)^{q-1},\label{eqn:c-div-1}
    \end{align}
    and thus:
    \begin{align}
        \frac{c^{q^2}+c}{c^q+c} &= \frac{c^{q^2}+c^q+c^q+c}{c^q+c}=\frac{c^{q^2}+c^q}{c^q+c} + 1= (c^q+c)^{q-1} + 1.\label{eqn:c-div-2}
    \end{align}
    Dividing \eqref{eqn:g(z)=0} by $(c^q+c)$, from \eqref{eqn:c-div-1} and \eqref{eqn:c-div-2} we have:
    \begin{align*}
        0 &= z^{q^2} + z^q((c^q+c)^{q-1} + 1)+z(c^q+c)^{q-1}\\
        0 &= z^{q^2} + z^q + z^q(c^q+c)^{q-1} + z(c^q+c)^{q-1}\\
        0 &= (z^q + z)^q + (z^q + z) (c^q+c)^{q-1}\\
        0 &= (z^q + z)^{q-1} + (c^q+c)^{q-1}\\
        \implies 1 &= \left(\frac{z^q + z}{c^q + c}\right)^{q-1}.
    \end{align*}
    and thus:
    \begin{align*}
        z^q + z &= c^q + c\\
        z^q + c^q &= z + c\\
        (z+c)^q &= z + c
    \end{align*}
    which implies $z+c\in \F_{2^k}\cap \F_{2^n}$. Since $\gcd(k,n)=1$, we have $\F_{2^k}\cap \F_{2^n} =\F_2$.
   Thus the only solutions of \eqref{eqn:g(z)=0} are $z\in\{0,1,c,1+c\}$, so $D_{1,c}^{(2)}f$ is 4-to-1. As this holds for all $c\in \F_{2^n}^*$, we have $\delta^2(f)=4$.
\end{proof}

\begin{proposition}\label{prop-gcd>1->u2^n}
        Let $k,n\in \N$ satisfy $\gcd(k,n)>1$. Then the Boolean function $f: \F_{2^n} \to \F_{2^n}$ given by $f(x)= x^{2^{2k}+2^k+1}$ has maximal second-order differential uniformity, i.e., second-order differential uniformity $2^n$.
\end{proposition}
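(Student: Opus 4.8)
The plan is to reuse the explicit expansion \eqref{eqn:d2(1,c)f} of $D_{1,c}^{(2)}f$, whose derivation in the proof of Theorem~\ref{thm-gcd=1->optimal} is a purely algebraic computation and so remains valid for arbitrary $k,n$. Writing $q=2^k$, that formula exhibits $D_{1,c}^{(2)}f$ as an $\F_2$-affine map in $x$ whose linear part is the linearized polynomial
\begin{align*}
L_c(x)=(c^q+c)\,x^{q^2}+(c^{q^2}+c)\,x^q+(c^{q^2}+c^q)\,x.
\end{align*}
By Proposition~\ref{prop-a,b-equiv-1,c} it suffices to produce a single $c\in\F_{2^n}\setminus\{0,1\}$ and a value $c'$ for which the equation $D_{1,c}^{(2)}f(x)=c'$ has $2^n$ solutions.

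The key observation is that whenever $L_c$ is the zero map on $\F_{2^n}$, the map $x\mapsto D_{1,c}^{(2)}f(x)$ is constant, so its unique value $c'$ is attained by all $2^n$ elements of $\F_{2^n}$, giving $\delta^2(f)=2^n$. Thus the proposition reduces to finding an admissible $c$ that kills all three coefficients of $L_c$ simultaneously, i.e.\ $c^q+c=c^{q^2}+c=c^{q^2}+c^q=0$, equivalently $c^{q^2}=c^q=c$.

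To find such $c$, set $d=\gcd(k,n)$ and take $c\in\F_{2^d}$. Since $d\mid k$ we have $\F_{2^d}\subseteq\F_{2^k}$, so $c^q=c^{2^k}=c$ and likewise $c^{q^2}=c^{2^{2k}}=c$; hence all three coefficients of $L_c$ vanish. Because $d\mid n$ we also have $\F_{2^d}\subseteq\F_{2^n}$, and since $d>1$ the field $\F_{2^d}$ contains at least four elements, so we may choose $c\in\F_{2^d}\setminus\{0,1\}$. For this $c$ the map $D_{1,c}^{(2)}f$ is constant, and the conclusion $\delta^2(f)=2^n$ follows.

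There is no substantial obstacle here; the only points requiring care are confirming that the derivation of \eqref{eqn:d2(1,c)f} nowhere invoked $\gcd(k,n)=1$ (that hypothesis is used only afterwards, to guarantee that $c^q$, $c^{q^2}$, $c$ are distinct), and checking the two containments $\F_{2^{\gcd(k,n)}}\subseteq\F_{2^k}$ and $\F_{2^{\gcd(k,n)}}\subseteq\F_{2^n}$ together with the existence of an admissible $c\neq 0,1$, all of which hold precisely because $\gcd(k,n)>1$.
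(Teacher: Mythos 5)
Your proof is correct and follows essentially the same route as the paper: both arguments pick $c\in\F_{2^{\gcd(k,n)}}\setminus\{0,1\}$ (which exists since $\gcd(k,n)>1$), observe that then $c^{q^2}=c^q=c$ so all coefficients of the linear part of $D_{1,c}^{(2)}f$ vanish, and conclude that $D_{1,c}^{(2)}f$ is constant, hence attains one value $2^n$ times. The only cosmetic difference is that the paper phrases this via equation \eqref{eqn:g(z)=0} (the vanishing of $D_{1,c}^{(2)}f(z)-D_{1,c}^{(2)}f(0)$ for all $z$) rather than via the linearized polynomial $L_c$, which is the same observation.
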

\begin{proof}
    Using the notation from the proof of Theorem~\ref{thm-gcd=1->optimal}, we have from \eqref{eqn:g(z)=0} that:
    \begin{align*}
        0 &= z^{q^2}(c^q+c) + z^q(c^{q^2}+c)+z(c^{q^2}+c^q),
    \end{align*}
    where $q=2^k$. As $\gcd(k,n)=u>1$, we have that $\F_{2^u}$ is a subfield of $\F_{2^n}$. Thus whenever $c\in \F_{2^u}\setminus\{0,1\}$ we have $c^{q^2}=c^q=c$, so choosing any such $c$ we have that \eqref{eqn:g(z)=0} holds for all $z$. As such, $D_{1,c}^{(2)}f(0)= D_{1,c}^{(2)}f(0+z)$ for all $z \in \F_{2^n}$, so $f$ has second-order differential uniformity $2^n$.
\end{proof}
Note that $f$ having second-order uniformity $2^n$ does not provide information on the exact value of $\gcd(k,n)$.
Discussing Table~\ref{tab:known-2nd-diff-powers}, Proposition~\ref{prop-gcd>1->u2^n} contains and expands on the known second-order differential uniformity of the Bracken-Leander function (row 3).

Combining the two results above, we find the following result.
\begin{theorem}\label{thm-2^2k-optimal}
    Given $k,n\in \N$, define the Boolean function $f: \F_{2^n} \to \F_{2^n}$ by $f(x)= x^{2^{2k}+2^k+1}$. Then $f$ has optimal second-order uniformity if and only if $\gcd(k,n)=1$. Otherwise, $f$ has second-order uniformity $2^n$.
\end{theorem}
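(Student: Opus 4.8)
The plan is to recognise that this theorem is simply the synthesis of the two results already established in this section, exploiting the trivial dichotomy that for any $k,n\in\N$ either $\gcd(k,n)=1$ or $\gcd(k,n)>1$, with no third possibility. I would therefore split the argument along these two exhaustive and mutually exclusive cases and quote the relevant prior statement in each, rather than redoing any computation.

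In the case $\gcd(k,n)=1$, Theorem~\ref{thm-gcd=1->optimal} gives directly that $\delta^2(f)=4$, so $f$ is optimal; this simultaneously proves the ``if'' direction of the equivalence. In the case $\gcd(k,n)>1$, Proposition~\ref{prop-gcd>1->u2^n} gives directly that $\delta^2(f)=2^n$, which is the asserted ``otherwise'' value. It then remains to assemble these into the stated equivalence, in particular the ``only if'' direction, which I would prove contrapositively: if $\gcd(k,n)>1$ then $\delta^2(f)=2^n$, and this fails to equal the optimal value $4$ as soon as $2^n\neq 4$, i.e.\ as soon as $n\geq 3$ (noting $n\geq 2$ is already forced by $\gcd(k,n)>1$).

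The one point requiring care — and the only genuine obstacle in an otherwise immediate argument — is the small-field bookkeeping behind the converse. Recall that $\delta^2(f)\geq 4$ and is always a multiple of $4$, so when $\gcd(k,n)>1$ the value $\delta^2(f)=2^n$ strictly exceeds $4$ exactly when $n\geq 3$, and there the ``only if'' direction holds cleanly. In the degenerate case $n=2$ one has $2^n=4$, so the two regimes coincide and every function on $\F_{4}$ is trivially optimal: for any admissible $a\neq b$ in $\F_{4}^*$ the set $\{x,x+a,x+b,x+a+b\}$ equals all of $\F_{4}$, forcing $D^{(2)}_{a,b}f$ to be constant and hence $\delta^2(f)=4$. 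I would either restrict attention to $n\geq 3$ or flag this degeneracy explicitly, after which both the equivalence and the ``otherwise'' clause follow at once from Theorem~\ref{thm-gcd=1->optimal} and Proposition~\ref{prop-gcd>1->u2^n}.
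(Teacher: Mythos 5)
Your proposal is correct and takes essentially the same route as the paper, whose entire proof of this theorem is the single line ``Combining the two results above'' --- i.e., exactly your case split invoking Theorem~\ref{thm-gcd=1->optimal} when $\gcd(k,n)=1$ and Proposition~\ref{prop-gcd>1->u2^n} when $\gcd(k,n)>1$. Your additional observation about the degenerate case $n=2$, where $2^n=4$ so that every function on $\F_4$ is trivially optimal and the ``only if'' direction fails as literally stated, is a genuine point of care that the paper silently glosses over.
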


Note that as $\nabla_f\leq \delta^2(f)$, we also have the following result.
\begin{corollary}\label{cor-zdu-leq-4}
    Let $k,n\in \N$ satisfy $\gcd(k,n)=1$. Then the Boolean function $f: \F_{2^n} \to \F_{2^n}$ given by $f(x)= x^{2^{2k}+2^k+1}$ has second-order zero differential uniformity $\nabla_f\leq 4$.
\end{corollary}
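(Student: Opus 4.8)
The plan is to read this off directly from Theorem~\ref{thm-gcd=1->optimal} together with the general inequality $\nabla_f \leq \delta^2(f)$ noted immediately after Definition~\ref{def-second-zero-unif}. The point is that, in characteristic $2$, the second-order zero differential spectrum $\nabla_f(a,b)$ counts solutions of the single equation $D^{(2)}_{a,b}f(x) = 0$, which is the special case $c = 0$ of the family of equations $D^{(2)}_{a,b}f(x) = c$ governing $\delta^2(f)$.

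First I would recall why $\nabla_f(a,b) \leq \delta^2(f)$ holds for every admissible pair $(a,b)$ with $a\neq b$ and $a,b\neq 0$. By Definition~\ref{def-second-zero-unif} we have $\nabla_f(a,b) = \lvert\{x\in\F_{2^n} : D^{(2)}_{a,b}f(x) = 0\}\rvert$, and this is exactly one of the terms (the $c=0$ term) in the maximum defining $\delta^2(f)$ in Definition~\ref{def-second-diff-unif}. Hence $\nabla_f(a,b)\leq \delta^2(f)$, and taking the maximum over all such $(a,b)$ gives $\nabla_f\leq \delta^2(f)$.

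Second, I would invoke the hypothesis $\gcd(k,n)=1$ and apply Theorem~\ref{thm-gcd=1->optimal}, which asserts that under exactly this condition the function $f(x)=x^{2^{2k}+2^k+1}$ has optimal second-order differential uniformity, i.e. $\delta^2(f)=4$. Combining the two facts yields $\nabla_f\leq \delta^2(f)=4$, which is the claim.

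There is no genuine obstacle here: the statement is an immediate corollary, and the substantive work has already been carried out in the proof of Theorem~\ref{thm-gcd=1->optimal}, where $D^{(2)}_{1,c}f$ was shown to be $4$-to-$1$. The only subtlety worth flagging is that $\nabla_f$ records only the preimages of $0$, so the corollary delivers an upper bound rather than an exact value; since each fiber of the $4$-to-$1$ map has size $0$ or $4$, in fact $\nabla_f\in\{0,4\}$, and determining which would require deciding, for each $c$, whether $0$ lies in the image of $D^{(2)}_{1,c}f$. This refinement is not needed for the stated inequality.
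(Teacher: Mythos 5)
Your proposal is correct and is exactly the paper's argument: the paper derives this corollary in one line from the inequality $\nabla_f \leq \delta^2(f)$ (noted after the definition of second-order zero differential uniformity) combined with Theorem~\ref{thm-gcd=1->optimal}. Your additional unpacking of why the inequality holds (the $c=0$ case of the defining maximum) and the observation that $\nabla_f \in \{0,4\}$ are both sound, though the paper does not spell them out.
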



\section{General cubic monomials}\label{sec-general}

We now discuss power functions of algebraic degree 3 in the general case. The following is a necessary condition for optimality resembling the contrapositive of Proposition~\ref{prop-gcd>1->u2^n} used in the proof of Theorem~\ref{thm-2^2k-optimal}. 

\begin{proposition}\label{prop-general->gcd}
    Let $i,j,n\in \N$, $0<i<j<n$, $j\neq 2i$, and define the Boolean function $f: \F_{2^n} \to \F_{2^n}$ given by $f(x)= x^{2^j+2^i+1}$. Suppose $f$ has optimal second-order differential uniformity. Then if $n$ is odd, we have $\gcd (i,n)=\gcd (j,n)=\gcd(j-i,n)=1$. If $n$ is even, then one of these equals to 2, and the others equal to 1.
\end{proposition}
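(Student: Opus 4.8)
The plan is to reproduce the second-derivative computation from the proof of Theorem~\ref{thm-gcd=1->optimal} with the two exponents kept independent. Writing $q=2^i$, $Q=2^j$, Proposition~\ref{prop-a,b-equiv-1,c} lets me work only with $D_{1,c}^{(2)}f$ for $c\in\F_{2^n}\setminus\{0,1\}$. Expanding $(x+a)^{Q+q+1}=(x^{Q}+a^{Q})(x^{q}+a^{q})(x+a)$ and summing over $a\in\{0,1,c,1+c\}$, the only monomials in $x$ that survive are $x^{Q},x^{q},x$ and a constant (their exponents are pairwise distinct since $0<i<j$), giving
\begin{align*}
    D_{1,c}^{(2)}f(x)=(c^{q}+c)\,x^{Q}+(c^{Q}+c)\,x^{q}+(c^{Q}+c^{q})\,x+\text{const},
\end{align*}
the exact analogue of \eqref{eqn:d2(1,c)f}. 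As in \eqref{eqn:g(z)=0}, two inputs agree under $D_{1,c}^{(2)}f$ precisely when their difference lies in the kernel of the linearised polynomial $L_c(z)=(c^{q}+c)z^{Q}+(c^{Q}+c)z^{q}+(c^{Q}+c^{q})z$, so every fibre has size $\lvert\ker L_c\rvert$. By Definition~\ref{def-optimal-second-unif}, optimality is then equivalent to $\lvert\ker L_c\rvert=4$ for all such $c$, and I shall prove the contrapositive: if a gcd is too large, some $c$ makes $\lvert\ker L_c\rvert>4$.

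The engine is a degeneration for each of the three coefficients $c^{q}+c$, $c^{Q}+c$, $c^{Q}+c^{q}$ (the third being the sum of the first two). Set $u_i=\gcd(i,n)$, $u_j=\gcd(j,n)$, $u_{ji}=\gcd(j-i,n)$. Choosing $c\in\F_{2^{u_i}}\setminus\{0,1\}$ forces $c^{q}=c$, kills the first coefficient, and collapses $L_c$ to $(c^{Q}+c)(z^{q}+z)$, whose kernel is $\F_{2^{u_i}}$ of size $2^{u_i}$ (or $L_c\equiv0$ if in addition $c^{Q}=c$); the symmetric choices $c\in\F_{2^{u_j}}\setminus\{0,1\}$ and $c\in\F_{2^{u_{ji}}}\setminus\{0,1\}$ kill the second and third coefficients and yield kernels of sizes $2^{u_j}$ and $2^{u_{ji}}$. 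Such a $c\ne0,1$ exists as soon as the relevant gcd is at least $2$, so optimality immediately forces $u_i,u_j,u_{ji}\le2$: any gcd at least $3$ produces a fibre of size at least $2^{3}=8$ (or $2^{n}$ in the collapsed case).

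To fix the exact values I would pass to the three cyclic gaps $i,\ j-i,\ n-j$, which sum to $n$ and whose gcds with $n$ are exactly $u_i,u_{ji},u_j$ (using $\gcd(n-j,n)=\gcd(j,n)$); this reflects the fact that $\delta^2$ is a cyclotomic invariant and that $d$ is determined, up to cyclotomic equivalence, by this triple of gaps. If $n$ is odd, each gcd divides an odd number, hence is odd, and with the bound $\le2$ this gives $u_i=u_j=u_{ji}=1$. If $n$ is even, a gap is even exactly when its gcd with $n$ is even, i.e.\ equal to $2$; since the gaps sum to an even number, an even number of them are odd, so one or three gaps are even. The all-three-even configuration is impossible, for then any $c\in\F_4\setminus\{0,1\}$ (which lies in $\F_{2^n}$ because $2\mid n$) annihilates all three coefficients, giving $L_c\equiv0$ and a fibre of size $2^{n}$; hence exactly one gap is even, so exactly one of the gcds equals $2$ and the other two equal $1$.

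The derivative computation is routine — it is the $Q\ne q^{2}$ version of \eqref{eqn:d2(1,c)f} — so the real work is bookkeeping: checking that each degeneration supplies an admissible $c\notin\{0,1\}$ with the stated kernel, and treating the collapse subcases uniformly. I expect the even-$n$ analysis to be the crux, namely combining the parity constraint on the gaps with the total-collapse argument to exclude the all-even case and thereby force a single gcd to equal $2$; the hypothesis $j\ne2i$ is what keeps us clear of the already-settled case of Theorem~\ref{thm-2^2k-optimal}, where two of the gaps coincide and the sharper characterisation there applies.
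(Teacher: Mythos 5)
Your proof is correct. Up to the point where the three gcds are bounded by $2$, it follows the paper's argument: the paper performs the identical expansion of $D_{1,c}^{(2)}f$, implicitly uses that the fibres are cosets of the kernel of the associated linearised map---phrased there as the three rearrangements \eqref{eq:cond2i}, \eqref{eq:cond2j}, \eqref{eq:condboth} of your equation $L_c(z)=0$---and runs the same subfield degenerations $c\in\F_{2^{\gcd(\cdot,n)}}\setminus\{0,1\}$ to force $\gcd(i,n),\gcd(j,n),\gcd(j-i,n)\le 2$, after which the odd-$n$ case is immediate in both proofs. Where you genuinely diverge is the even-$n$ endgame. The paper argues in two directions: assuming one gcd equals $2$, it shows via field containments (e.g.\ $c\notin\F_{2^j}\cap\F_{2^n}$ forces $\gcd(j,n)=1$ or $\gcd(i,j,n)=1$, the latter making $j$ odd) that the other two gcds equal $1$; and conversely, if two gcds equal $1$, a parity argument gives that the third equals $2$. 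You replace all of this with a count on the three cyclic gaps $i$, $j-i$, $n-j$ summing to $n$: parity forces the number of gcds equal to $2$ to be one or three, and the all-three configuration collapses because any $c\in\F_4\setminus\{0,1\}\subseteq\F_{2^n}$ then annihilates all three coefficients, giving $L_c\equiv 0$ and a fibre of size $2^n$. Your finish is shorter and cleaner---one parity count plus one collapse argument instead of a two-directional containment analysis---and it makes transparent why exactly one gcd must equal $2$ when $n$ is even; the paper's route, by contrast, never needs to introduce the third gap $n-j$ explicitly. One small remark: neither your argument nor the paper's ever uses the hypothesis $j\neq 2i$; as you suspected, it serves only to separate this statement from the case already settled by Theorem~\ref{thm-2^2k-optimal}, whose conclusion is in fact consistent with this proposition.
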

\begin{proof}
    Let $c\in \F_{2^n}^*$ such that $c\neq 1$. Then we have:
    \begin{align*}
        D_{1,c}^{(2)}f(x)&= x^{2^j+2^i+1}+(x+1)^{2^j+2^i+1}+(x+c)^{2^j+2^i+1}+(x+1+c)^{2^j+2^i+1}\\
        &= x^{2^j+2^i+1} + (x+1)(x^{2^i}+1)(x^{2^j}+1)+(x+c)(x^{2^i}+c^{2^i})(x^{2^j}+c^{2^j})\\
        &\quad +(x+1+c)(x^{2^i}+1+c^{2^i+1})(x^{2^j}+1+c^{2^j+1})\\
        &\quad\vdots\\
        &= x(c^{2^j}+c^{2^i})+x^{2^i}(c^{2^j}+c)+x^{2^j}(c^{2^i}+c)\\
        &\quad + (c^{2^j}+c^{2^i+1})+(c^{2^i}+c^{2^j+1})+(c^{2^i+2^j}+c).
    \end{align*}
    Then as $f$ is optimal, for each fixed $x$ there exist exactly four elements $y$ such that $D_{1,c}^{(2)}f(x)= D_{1,c}^{(2)}f(y)$. Defining $z=x+y$, we have:
    \begin{align}
        0&= z(c^{2^j}+c^{2^i})+z^{2^i}(c^{2^j}+c)+z^{2^j}(c^{2^i}+c)\notag\\
        &= z(c^{2^j}+c^{2^i})+z^{2^i}(c^{2^j}+c^{2^i}+c^{2^i}+c)+z^{2^j}(c^{2^i}+c)\notag\\
        &= (z+z^{2^i})(c^{2^j}+c^{2^i})+(z^{2^j}+z^{2^i})(c^{2^i}+c).\label{eq:cond2i}
    \end{align}
    and similarly:
    \begin{align}
        0&= (z+z^{2^j})(c^{2^j}+c^{2^i})+(z^{2^j}+z^{2^i})(c^{2^j}+c), \textrm{ and}\label{eq:cond2j}\\
        0&= (z+z^{2^j})(c^{2^i}+c)+(z+z^{2^i})(c^{2^j}+c).\label{eq:condboth}
    \end{align}
    Firstly, suppose $\gcd (i,n)\geq 3$. Then there would exist $c\in \F_{2^{\gcd(i,n)}}\subseteq \F_{2^n}$, $c\neq 0,1$, such that $c^{2^i}+c= 0$. Then, from Equation \eqref{eq:cond2i}, we have:
    \begin{align*}
        0&= (z+z^{2^i})(c^{2^j}+c^{2^i})
    \end{align*}
    for at most 4 solutions. But there exist at least 8 elements $z\in \F_{2^{\gcd(i,n)}}$, contradicting $f$ being optimal. As such, we must have $\gcd(i,n)\leq 2$. Similarly, by Equation \eqref{eq:cond2j} we must have $\gcd(j,n)\leq 2$, and by Equation \eqref{eq:condboth} we must have $\gcd(j-i,n)\leq 2$. Note then that if $n$ is odd, we have $\gcd(i,n)=\gcd(j,n)=\gcd(j-i,n)=1$. This proves the first statement.

    Assume now that $n$ is even, and that $\gcd(i,n)=2$, i.e. $i$ is even also. Again, there would exist $c\in \F_{2^{\gcd(i,n)}}\subseteq \F_{2^n}$, $c\neq 0,1$, such that $c^{2^i}+c= 0$, and from Equations \eqref{eq:cond2i} and \eqref{eq:condboth}, we have:
    \begin{align*}
        0&= (z+z^{2^i})(c^{2^j}+c^{2^i}) \textrm{, and}\\
        0&= (z+z^{2^i})(c^{2^j}+c).
    \end{align*}
    As $f$ is optimal, we must have $c^{2^{j-i}}\neq c$ and $c^{2^j}\neq c$ for all $c\in \F_{2^{\gcd(i,n)}}$ (otherwise these equations would hold for all $z$ for some $c$). As such, we must have $c\notin \F_{2^j}\cap\F_{2^n}$ and $c\notin \F_{2^{j-i}}\cap\F_{2^n}$. The case $c\notin \F_{2^j}\cap\F_{2^n}$ requires either $\F_{2^j}\not\subseteq \F_{2^n}$ or $\F_{2^{\gcd(i,n)}}\not\subseteq \F_{2^j}$, i.e. either $\gcd(j,n)=1$ or $\gcd(\gcd(i,n),j)=\gcd(i,j,n)=1$. If $\gcd(i,j,n)=1$, we must have that $j$ is odd, and so $\gcd(j,n)\neq 2$, so $\gcd(j,n)=1$. Similarly, from $c\notin \F_{2^{j-i}}\cap\F_{2^n}$ we see that $\gcd(j-i,n)=1$.
    
    Similarly, beginning with the assumption that $\gcd(j,n)=2$ implies that $\gcd(i,n)=1=\gcd(j-i,n)$, and assuming $\gcd(j-i,n)=2$ implies $\gcd(i,n)=1=\gcd(j,n)$.

    Conversely, assume $\gcd(i,n)=1=\gcd(j,n)$. Then $i,j$ are odd, so $j-i$ is even, so $\gcd(j-i,n)=2$. Similarly, $\gcd(i,n)=1=\gcd(j-i,n)$ implies $\gcd(j,n)=2$, and $\gcd(j,n)=1=\gcd(j-i,n)$ implies $\gcd(i,n)=2$. This concludes the proof of the second statement.    
\end{proof}

We note that the converse is generally false. For example, $d=11$ for $n=7,8$ satisfies the result, but was computed to have second-order differential uniformity 8 in both cases.

The next result leads to a sufficient condition for a general power function of algebraic degree 3 to be affine equivalent to a Boolean function of the form discussed in Theorem~\ref{thm-2^2k-optimal}.
\begin{proposition}\label{prop-ijn-equivs}
    Let $i,j,n\in \N$, $0<i<j<n$, and define the function $f: \F_{2^n} \to \F_{2^n}$ given by $f(x)= x^{2^j+2^i+1}$. Then if $j\equiv 2i$, or $i\equiv 2j$, or $i+j\equiv 0 \Mod{n}$, there exists $k$ such that $f$ is affine equivalent to $g(x)= x^{2^{2k}+2^k+1}$.
\end{proposition}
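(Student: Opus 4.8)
The plan is to exploit the standard fact that, over $\F_{2^n}$, the monomials $x^d$ and $x^{d'}$ are affine (indeed linearly) equivalent whenever $d' \equiv 2^t d \pmod{2^n-1}$ for some $t \ge 0$. Concretely, the Frobenius map $\phi_t \colon x \mapsto x^{2^t}$ is an $\F_2$-linear automorphism of $\F_{2^n}$, and $\phi_t \circ f(x) = (x^d)^{2^t} = x^{2^t d}$, while reducing the exponent modulo $2^n - 1$ does not change the underlying function on $\F_{2^n}$. Since taking $L_1 = \mathrm{id}$ and $L_2 = \phi_t^{-1}$ realises this as an affine equivalence in the sense of the paper's definition, it suffices to show that in each of the three cases the exponent $2^j + 2^i + 1$ lies in the cyclotomic coset (modulo $2^n - 1$) of some $2^{2k} + 2^k + 1$.

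The clean way to organise this is combinatorial. I would encode the exponent as the binary word with ones in positions $0, i, j$, read modulo $n$ (legitimate since $2^n \equiv 1 \pmod{2^n-1}$); multiplication of the exponent by $2^t$ then cyclically shifts all three positions by $t$. An exponent of the target form $2^{2k} + 2^k + 1$ is precisely one whose three positions $\{0, k, 2k\}$ form a three-term arithmetic progression in $\Z/n\Z$. Thus the proposition reduces to the elementary observation that the three hypotheses are exactly the three ways in which one point of $\{0, i, j\}$ can be the midpoint (modulo $n$) of the other two: $j \equiv 2i$ says $i$ is the midpoint of $\{0, j\}$, $i \equiv 2j$ says $j$ is the midpoint of $\{0, i\}$, and $i + j \equiv 0$ says $0$ is the midpoint of $\{i, j\}$.

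It then remains to record the explicit parameters. If $j \equiv 2i \pmod n$ then $2^j \equiv 2^{2i} \pmod{2^n-1}$, so $2^j + 2^i + 1 \equiv 2^{2i} + 2^i + 1$, and $t = 0$, $k = i$ works (here $f$ and $g$ are literally equal). Symmetrically, $i \equiv 2j \pmod n$ gives $t = 0$, $k = j$. Finally, if $i + j \equiv 0 \pmod n$, multiplying by $2^i$ yields $2^i(2^j + 2^i + 1) = 2^{i+j} + 2^{2i} + 2^i \equiv 1 + 2^{2i} + 2^i \pmod{2^n-1}$, since $2^{i+j} \equiv 2^0 = 1$; hence $t = i$, $k = i$ works. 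In every case the resulting $k$ lies in $\{i, j\} \subseteq \{1, \dots, n-1\}$, so $k \in \N$ as required.

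There is no genuine obstacle in this argument; the only points needing care are to justify once and for all that postcomposition with a Frobenius power realises the cyclotomic-coset equivalence as an affine equivalence (so that the affine invariant $\delta^2$ is preserved), and to compare exponents modulo $2^n - 1$ rather than as integers, which is exactly what lets $2^{i+j}$ collapse to $1$ in the third case. The reformulation in terms of three-term arithmetic progressions is what makes the three hypotheses transparently exhaustive, and is the step I would foreground.
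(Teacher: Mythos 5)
Your proof is correct and takes essentially the same approach as the paper: both realise the exponent equivalence $d\sim 2^t d \pmod{2^n-1}$ by composing with a Frobenius power (an $\F_2$-linear automorphism), and your first and third cases coincide with the paper's computations exactly. The only cosmetic difference is the middle case, where you note $2^i\equiv 2^{2j}\pmod{2^n-1}$ and take $k=j$ with no Frobenius at all, while the paper takes $k=j-i$ and applies $A(x)=x^{2^{2(j-i)}}$; the two choices are interchangeable (indeed $\gcd(j,n)=\gcd(j-i,n)$ when $i\equiv 2j\pmod{n}$, so even the downstream optimality condition in Corollary~\ref{cor-equiv-opt} is unaffected).
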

\begin{proof}
    If $j\equiv 2i \Mod{n}$, we have $j=2i$ as $i,j<n$, and so $f=g$ for $k=i$, and we are done. If $i \equiv 2j \Mod{n}$, define $k=j-i$, and define $A(x)=x^{2^{2k}}$. Then:
    \begin{align*}
        A(f(x))&=  \left(x^{2^{j}+2^{i}+1}\right)^{2^{2k}}\\
        &= x^{2^{3j-2i}+2^{2j-i}+2^{2k}}\\
        &=x^{2^k2^{2j-i}+2^{2j-i}+2^{2k}}\\
        &=x^{2^{2k}+2^k+1}\\
        &=g(x).
    \end{align*}
    Note that the fourth equality holds as $2^{2j-i}\equiv 1$ (mod $2^n-1$). Finally, if $i+j\equiv 0 \Mod{n}$, define $k=i$, and define $A(x)=x^{2^{k}}$. Then:
    \begin{align*}
        A(f(x))&=  \left(x^{2^{j}+2^{i}+1}\right)^{2^{k}}\\
        &=x^{2^{i+j}+2^{2i}+2^{i}}\\
        &=x^{2^{2k}+2^k+1}\\
        &=g(x). \tag*{\qedhere}
    \end{align*}
\end{proof}
Together with Theorem~\ref{thm-gcd=1->optimal}, this provides the following classes of Boolean functions with optimal second-order differential uniformity.
\begin{corollary}\label{cor-equiv-opt}
    Let $i,j,n\in \N$, $0<i<j<n$, and define the function $f: \F_{2^n} \to \F_{2^n}$ given by $f(x)= x^{2^j+2^i+1}$. Then if either:
    \begin{enumerate}
        \item $j\equiv 2i\Mod{n}$, with $\gcd(i,n)=1$, or
        \item $2j\equiv i\Mod{n}$, with $\gcd(j-i,n)=1$, or
        \item $j\equiv -i\Mod{n}$, with $\gcd(i,n)=1$,
    \end{enumerate}
    the function $f$ has optimal second-order differential uniformity.
\end{corollary}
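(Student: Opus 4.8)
The plan is to assemble this corollary directly from the three ingredients already established: the affine equivalences of Proposition~\ref{prop-ijn-equivs}, the optimality criterion of Theorem~\ref{thm-gcd=1->optimal}, and the fact (noted after Definition~\ref{def-optimal-second-unif}) that the second-order differential uniformity is an affine invariant. In each of the three cases I would exhibit an explicit $k$ for which $f$ is affine equivalent to $g(x)=x^{2^{2k}+2^k+1}$, check that the stated gcd hypothesis forces $\gcd(k,n)=1$, and then invoke Theorem~\ref{thm-gcd=1->optimal} to conclude that $g$---and hence $f$---has optimal second-order differential uniformity.

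Concretely, I would treat the cases in turn, reading off the value of $k$ produced in the proof of Proposition~\ref{prop-ijn-equivs}. In case (1), where $j\equiv 2i\Mod{n}$, the proposition yields affine equivalence with $k=i$, and the hypothesis $\gcd(i,n)=1$ is exactly $\gcd(k,n)=1$. In case (2), where $2j\equiv i\Mod{n}$ (the condition written $i\equiv 2j$ in Proposition~\ref{prop-ijn-equivs}), the proposition furnishes $k=j-i$, so the hypothesis $\gcd(j-i,n)=1$ again reads $\gcd(k,n)=1$. In case (3), where $j\equiv -i\Mod{n}$ (equivalently $i+j\equiv 0\Mod{n}$), the proposition gives $k=i$, and $\gcd(i,n)=1$ is once more $\gcd(k,n)=1$. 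Thus in all three instances the gcd condition imposed in the corollary is precisely the gcd condition on the $k$ delivered by Proposition~\ref{prop-ijn-equivs}.

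Having aligned the cases, the remainder is immediate: since $\gcd(k,n)=1$, Theorem~\ref{thm-gcd=1->optimal} gives $\delta^2(g)=4$, and because $f$ and $g$ are affine equivalent while $\delta^2$ is an affine invariant, we obtain $\delta^2(f)=\delta^2(g)=4$, i.e.\ $f$ is optimal. The only point requiring care---and the sole place an error could plausibly creep in---is the bookkeeping that matches the correct value of $k$ from Proposition~\ref{prop-ijn-equivs} to the correct gcd hypothesis in each case; there is no genuine analytic obstacle here, as all the substantive work has already been carried out in the two results being combined.
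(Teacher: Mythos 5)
Your proposal is correct and follows exactly the route the paper intends: the paper presents this corollary with no separate proof, stating only that it follows from combining Proposition~\ref{prop-ijn-equivs} with Theorem~\ref{thm-gcd=1->optimal} via affine invariance of $\delta^2$. Your case-by-case bookkeeping of $k$ (namely $k=i$, $k=j-i$, $k=i$ in cases (1), (2), (3) respectively) matches the values produced in the proof of Proposition~\ref{prop-ijn-equivs}, so the argument is complete as written.
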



\section{Experimental results}\label{sec-comp}
To conclude, we provide topics for further investigation arising from computational data. We computed all optimal power functions of the form $f:\F_{2^n}\to \F_{2^n}$, $f(x)=x^d$, for $4\leq n\leq 20$. Noting that $x^d$ is affine equivalent to $x^{2^id}$ for all $i$, we consider exponents up to this transformation. Discussing the data, we make the following key observations.

All the optimal exponents of algebraic degree 3
were of the form $d=2^{2k}+2^k+1$ for some $k$ satisfying $\gcd(k,n)=1$ (up to the aforementioned transformation). In other words, every computed optimal cubic exponent was of the form described in Theorem~\ref{thm-gcd=1->optimal}. This suggests the following conjecture:
\begin{conjecture}
A cubic power function $f: \F_{2^n} \to \F_{2^n}$ defined by $f(x)=x^d$ has optimal second-order differential uniformity if and only if $d = 2^i(2^{2k} + 2^k +1) \Mod{2^n-1} $ for some $k,i$ with $\gcd(k,n)=1$.
\end{conjecture}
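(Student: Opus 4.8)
We first separate the two implications. The sufficiency direction is already in hand: if $d \equiv 2^i(2^{2k}+2^k+1)\Mod{2^n-1}$ with $\gcd(k,n)=1$, then, since $x^d$ and $x^{2^\ell d}$ are affine equivalent for every $\ell$, $f$ is affine equivalent to $x^{2^{2k}+2^k+1}$, and Theorem~\ref{thm-gcd=1->optimal} gives optimality. So the plan is to prove necessity. Using the same equivalence I would normalise so that the three binary positions of $d$ are $\{0,i,j\}$ with $0<i<j<n$, i.e. $d = 2^j+2^i+1$. A useful reformulation is that the three families of Corollary~\ref{cor-equiv-opt} are precisely the triples $\{0,i,j\}$ forming a three-term arithmetic progression modulo $n$ of common difference coprime to $n$: the shapes $\{0,i,2i\}$, $\{0,j,2j\}$ and $\{-i,0,i\}$ are exactly the cyclic and reflected images of such a progression, and the $\gcd$ hypotheses match the coprimality of the common difference. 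Thus the conjecture is equivalent to the clean statement that a cubic power function is optimal if and only if its exponent positions are an arithmetic progression modulo $n$ with difference coprime to $n$. The value $d=11$ on $\F_{2^7}$ (positions $\{0,1,3\}$, not a progression mod $7$, hence predicted non-optimal) shows this is strictly finer than the $\gcd$ conditions of Proposition~\ref{prop-general->gcd}.

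Next I would convert optimality into a single criterion. As computed in Proposition~\ref{prop-general->gcd}, writing $z=x+y$ the identity $D^{(2)}_{1,c}f(x)=D^{(2)}_{1,c}f(y)$ is governed by the linearised polynomial
\[
L_c(z) = (c^{2^i}+c)(z^{2^j}+z) + (c^{2^j}+c)(z^{2^i}+z).
\]
The expression $\Phi(c,z):=L_c(z)$ is $\F_2$-bilinear and symmetric in $c$ and $z$, and $\ker L_c$ always contains $\{0,1,c,c+1\}$; hence $f$ is optimal if and only if $\dim_{\F_2}\ker L_c = 2$ for all $c\in\F_{2^n}\setminus\{0,1\}$. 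Equivalently, the linearised fraction $G(t)=(t^{2^j}+t)/(t^{2^i}+t)$ must have only trivial collisions on $\F_{2^n}\setminus\F_{2^{\gcd(i,n)}}$, namely $G(c)=G(z)\Rightarrow z\in\{c,c+1\}$, with the contribution of $\F_{2^{\gcd(i,n)}}$ bounded exactly by Proposition~\ref{prop-general->gcd}. The whole problem is now to decide when this fraction admits a non-trivial collision.

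The route I would take to establish necessity is a global count. Put $N=\#\{(c,z)\in\F_{2^n}^2 : \Phi(c,z)=0\}$. Since $\ker L_0=\ker L_1=\F_{2^n}$ and $|\ker L_c|\ge 4$ for $c\ne 0,1$, one always has $N\ge 6\cdot 2^n-8$, with equality if and only if $f$ is optimal. Expanding with additive characters and using that $\Phi$ carries no pure $c$- or $z$-squares, each non-trivial character sum is a nonnegative power of two, giving
\[
N = 2^n + \sum_{\lambda\ne 0} 2^{\,n-\mathrm{rank}(M_\lambda)},
\]
where $M_\lambda$ is the $\F_2$-bilinear form $(c,z)\mapsto \mathrm{Tr}(\lambda\,\Phi(c,z))$. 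Optimality is then the sharp identity $\sum_{\lambda\ne 0}2^{\,n-\mathrm{rank}(M_\lambda)}=5\cdot 2^n-8$, so the task reduces to controlling the ranks $\mathrm{rank}(M_\lambda)$ and showing they are maximal for all but a prescribed set of $\lambda$ exactly when $\{0,i,j\}$ is a coprime-difference progression.

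The main obstacle is precisely this rank analysis. Each $M_\lambda$ is assembled from the $\F_2$-linear maps $c\mapsto c^{2^i}+c$ and $c\mapsto c^{2^j}+c$ twisted by $\lambda$ and by $\mathrm{Tr}$, so its rank is sensitive to finer divisibility relations among $i$, $j$, $j-i$ and $n$ than any $\gcd$ data can capture — as the $d=11$ example already demonstrates. In the equivalent fibre picture one must rule out every non-trivial solution of $G(c)=G(z)$, and the difficulty is that a bad $c$, when present, is in general not a Frobenius image $c^{2^s}$ of a fixed element and so resists a closed form; the substitution that linearises $G$ into a Gold-type map in the proof of Theorem~\ref{thm-gcd=1->optimal} is available only when $j=2i$. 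I therefore expect the conjecture to require either a uniform rank computation for the whole family $\{M_\lambda\}$, or a Weil-type estimate on the curve $\Phi(c,z)=0$ that is uniform in $n$; supplying one of these is the genuine hard core that remains open.
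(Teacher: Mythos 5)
This statement is a \emph{conjecture} in the paper: the authors supply no proof, only computational evidence (all optimal cubic exponents for $4\le n\le 20$ have the stated form) together with the necessary conditions of Proposition~\ref{prop-general->gcd} and the sufficient conditions of Corollary~\ref{cor-equiv-opt}. Your proposal does not prove it either, as you yourself state, so the comparison is about how far each side gets. Your sufficiency argument is correct and coincides with what the paper already has: $d\equiv 2^i(2^{2k}+2^k+1)\Mod{2^n-1}$ with $\gcd(k,n)=1$ makes $x^d$ affine equivalent to $x^{2^{2k}+2^k+1}$, and Theorem~\ref{thm-gcd=1->optimal} applies. Your reformulations are also sound and go a little beyond the paper's presentation: the conjectured exponents are exactly those whose three binary digit positions form a three-term arithmetic progression modulo $n$ with common difference coprime to $n$ (this matches the three cases of Corollary~\ref{cor-equiv-opt}); optimality of $x^{2^j+2^i+1}$ is equivalent to $\dim_{\F_2}\ker L_c=2$ for all $c\notin\F_2$, where $L_c(z)=(c^{2^i}+c)(z^{2^j}+z)+(c^{2^j}+c)(z^{2^i}+z)$ is the symmetric bilinear form implicit in \eqref{eq:cond2i}--\eqref{eq:condboth}; and your counting identity with threshold $6\cdot 2^n-8$ is a correct restatement of that criterion.

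The genuine gap is the necessity direction, which is precisely the open content of the conjecture: you set up the rank analysis of the forms $M_\lambda$ (equivalently, the exclusion of non-trivial collisions of $G(t)=(t^{2^j}+t)/(t^{2^i}+t)$) but do not carry it out, and you correctly identify why the paper's own technique does not extend --- the substitution that linearises the problem in the proof of Theorem~\ref{thm-gcd=1->optimal} exploits $j=2i$ and is unavailable for a general digit pattern $\{0,i,j\}$. Any successful argument must also be strictly finer than the $\gcd$ bookkeeping of Proposition~\ref{prop-general->gcd}, since $d=11$ for $n=7,8$ passes those conditions yet is non-optimal, as both you and the paper note. So after your writeup the status of the statement is unchanged from the paper: the ``if'' direction is a theorem, the ``only if'' direction remains open; what you have written is a plausible research programme (and a useful equivalent formulation), not a proof.
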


Other than the cubic functions, we found exactly two other optimal exponents, both of them of algebraic degree 4, namely $d=15$ for  $n=5$ and  $d=85$ for $n=10$. Observe that these exponents are of the form $d=2^{3k}+2^{2k}+2^k+1$ for $k=1,2$ respectively. For $k=3$, the exponent of this form would be $d=585$, but this exponent did not exhibit optimal second-order differential uniformity for any $n\le 20$. It would be of interest to determine a class of optimal power functions of algebraic degree~4 or higher.



\bibliography{refs}

@book{carlet2021book,
place={Cambridge},
title={Boolean Functions for Cryptography and Coding Theory},
publisher={Cambridge University Press},
author={Carlet, Claude},
year={2021}
}

@Inproceedings{lai1994higher,
author="Lai, Xuejia",
title="Higher Order Derivatives and Differential Cryptanalysis",
bookTitle="Communications and Cryptography: Two Sides of One Tapestry",
year="1994",
publisher="Springer US",
address="Boston, MA",
isbn="978-1-4615-2694-0",
doi="10.1007/978-1-4615-2694-0_23",
url="https://doi.org/10.1007/978-1-4615-2694-0_23"
}

@InProceedings{nyberg1993provable,
author="Nyberg, Kaisa
and Knudsen, Lars Ramkilde",
title="Provable Security Against Differential Cryptanalysis",
booktitle="Advances in Cryptology --- CRYPTO' 92",
year="1993",
publisher="Springer Berlin Heidelberg",
address="Berlin, Heidelberg",
pages="566--574"
}

@inproceedings{nyberg1993differential,
  title={Differentially Uniform Mappings for Cryptography},
  booktitle={Advances in Cryptology - EUROCRYPT '93, Workshop on the Theory and Application of of Cryptographic Techniques, Lofthus, Norway, May 23-27, 1993, Proceedings},
  series={Lecture Notes in Computer Science},
  publisher={Springer},
  volume={765},
  pages={55-64},
  doi={10.1007/3-540-48285-7_6},
  author={Kaisa Nyberg},
  year={1993}
}

@article{eddahmani2022explicit,
author = {Eddahmani, Said and Mesnager, Sihem},
year = {2022},
month = {05},
pages = {},
title = {{Explicit values of the DDT, the BCT, the FBCT, and the FBDT of the inverse, the gold, and the Bracken-Leander S-boxes}},
volume = {14},
journal = {Cryptography and Communications},
doi = {10.1007/s12095-022-00581-8}
}

@article{li2022secondorder,
author = {Li, Xia and Yue, Qin and Tang, Deng},
title = {The second-order zero differential spectra of almost perfect nonlinear functions and the inverse function in odd characteristic},
year = {2022},
issue_date = {May 2022},
publisher = {Springer-Verlag},
address = {Berlin, Heidelberg},
volume = {14},
number = {3},
issn = {1936-2447},
url = {https://doi.org/10.1007/s12095-021-00544-5},
doi = {10.1007/s12095-021-00544-5},
journal = {Cryptography Commun.},
month = {05},
pages = {653–662},
numpages = {10}
}

@misc{garg2023secondorder,
      title={The second-order zero differential spectra of some functions over finite fields}, 
      author={Kirpa Garg and Sartaj Ul Hasan and Constanza Riera and Pantelimon Stanica},
      year={2023},
      eprint={2309.04219},
      archivePrefix={arXiv},
      primaryClass={cs.IT},
note={arXiv: 2309.04219 [cs.IT]}
}

@misc{garg2023secondorderapn,
      title={The second-order zero differential spectra of some {APN} and other maps over finite fields}, 
      author={Kirpa Garg and Sartaj Ul Hasan and Constanza Riera and Pantelimon Stanica},
      year={2023},
      eprint={2310.13775},
      archivePrefix={arXiv},
      primaryClass={cs.IT},
note={arXiv: 2310.13775 [cs.IT]}
}

@article{aubry2018differential,
title = {Differential uniformity and second order derivatives for generic polynomials},
journal = {Journal of Pure and Applied Algebra},
volume = {222},
number = {5},
pages = {1095-1110},
year = {2018},
issn = {0022-4049},
doi = {https://doi.org/10.1016/j.jpaa.2017.06.009},
url = {https://www.sciencedirect.com/science/article/pii/S0022404917301305},
author = {Yves Aubry and Fabien Herbaut}
}

@misc{man2023secondorder,
      title={On the second-order zero differential spectra of some power functions over finite fields}, 
      author={Yuying Man and Nian Li and Zejun Xiang and Xiangyong Zeng},
      year={2023},
      eprint={2310.18568},
      archivePrefix={arXiv},
      primaryClass={cs.IT},
note={arXiv: 2310.18568 [cs.IT]}
}

@misc{man2023indepth,
      title={In-depth analysis of {S-boxes} over binary finite fields concerning their differential and {Feistel} boomerang differential uniformities}, 
      author={Yuying Man and Sihem Mesnager and Nian Li and Zejun Xiang and Xiangyong Zeng},
      year={2023},
      eprint={2309.01881},
      archivePrefix={arXiv},
      primaryClass={cs.IT},
note={arXiv: 2309.01881 [cs.IT]}
}

@InProceedings{eddahmani2024doublediff,
author="Eddahmani, Said
and Mesnager, Sihem",
editor="Vaudenay, Serge
and Petit, Christophe",
title="On the Double Differential Uniformity of Vectorial Boolean Functions",
booktitle="Progress in Cryptology - AFRICACRYPT 2024",
year="2024",
publisher="Springer Nature Switzerland",
address="Cham",
pages="3--20",
isbn="978-3-031-64381-1"
}

@book{cusick2009cryptographic,
  title={Cryptographic {Boolean} Functions and Applications},
  author={Cusick, T.W. and Stanica, P.},
  isbn={9780080952222},
  url={https://books.google.co.uk/books?id=OAkhkLSxxxMC},
  year={2009},
  publisher={Elsevier Science}
}

@article{biham1991differential,
  title={Differential Cryptanalysis of {DES}-like Cryptosystems},
  journal={J. Cryptology},
  volume={4},
  pages={3-72},
  doi={10.1007/BF00630563},
  author={Eli Biham and Adi Shamir},
  year={1991}
}

@InProceedings{wagner1999boomerang,
author="Wagner, David",
editor="Knudsen, Lars",
title="The Boomerang Attack",
booktitle="Fast Software Encryption",
year="1999",
publisher="Springer Berlin Heidelberg",
address="Berlin, Heidelberg",
pages="156--170",
isbn="978-3-540-48519-3"
}

@Article{dunkelman2024retracing,
author={Dunkelman, Orr
and Keller, Nathan
and Ronen, Eyal
and Shamir, Adi},
title={The Retracing Boomerang Attack, with Application to Reduced-Round AES},
journal={Journal of Cryptology},
year={2024},
month={Jul},
day={15},
volume={37},
number={3},
pages={32},
issn={1432-1378},
doi={10.1007/s00145-024-09512-7},
url={https://doi.org/10.1007/s00145-024-09512-7}
}

@InProceedings{cid2018bct,
author="Cid, Carlos
and Huang, Tao
and Peyrin, Thomas
and Sasaki, Yu
and Song, Ling",
editor="Nielsen, Jesper Buus
and Rijmen, Vincent",
title="Boomerang Connectivity Table: A New Cryptanalysis Tool",
booktitle="Advances in Cryptology -- EUROCRYPT 2018 ",
year="2018",
publisher="Springer International Publishing",
address="Cham",
pages="683--714",
isbn="978-3-319-78375-8"
}

@article{boukerrou2020feistelboomerang,
title={On the {Feistel} Counterpart of the Boomerang Connectivity Table: Introduction and Analysis of the {FBCT}},
volume={2020},
url={https://tosc.iacr.org/index.php/ToSC/article/view/8568},
DOI={10.13154/tosc.v2020.i1.331-362},
number={1},
journal={IACR Transactions on Symmetric Cryptology},
author={Boukerrou, Hamid and Huynh, Paul and Lallemand, Virginie and Mandal, Bimal and Minier, Marine},
year={2020},
month={May},
pages={331–362}
}

@article{bracken2010highlynonlinear,
title = {A highly nonlinear differentially 4 uniform power mapping that permutes fields of even degree},
journal = {Finite Fields and Their Applications},
volume = {16},
number = {4},
pages = {231-242},
year = {2010},
issn = {1071-5797},
doi = {https://doi.org/10.1016/j.ffa.2010.03.001},
url = {https://www.sciencedirect.com/science/article/pii/S1071579710000249},
author = {Carl Bracken and Gregor Leander}
}

\end{document}